\newtheorem{theorem}{Theorem}
\newtheorem{lemma}{Lemma}
\theoremstyle{definition}
\theoremstyle{remark}
\newtheorem*{rep@theorem}{\rep@title}
\newcommand{\newreptheorem}[2]{%
\newenvironment{rep#1}[1]{%
 \def\rep@title{#2 \ref{##1}}%
 \begin{rep@theorem}}%
 {\end{rep@theorem}}}
\newcommand{\figref}[1]{Figure~\ref{fig:#1}}
\newcommand{\secref}[1]{Section~\ref{sec:#1}}
\newcommand{\appref}[1]{Appendix~\ref{app:#1}}
\newcommand{\lemref}[1]{Lemma~\ref{lem:#1}}
\newcommand{\thmref}[1]{Theorem~\ref{thm:#1}}
\newcommand{\thmsref}[1]{Theorems~\ref{thm:#1}}
\newcommand{\thmssref}[1]{\ref{thm:#1}}
\newcommand{\eqnref}[1]{\eqref{eqn:#1}}
\DeclareMathOperator{\diag}{diag}
\DeclareMathOperator*{\argmin}{arg\,min}
\newcommand{\one}[1]{{\mathbbm{1}}_{{#1}}}
\newcommand{\inner}[2]{\langle{#1},{#2}\rangle} 
\renewcommand{\O}[1]{\mathcal{O}\left({#1}\right)}
\def\R{\mathbb{R}}
\newcommand{\zeros}{\mathbf{0}}
\newcommand{\ignore}[1]{}
\newcommand{\onebr}[1]{{\mathbbm{1}}\left\{{{#1}}\right\}}
\newcommand{\norm}[1]{\|{#1}\|}
\newcommand{\opnorm}[1]{\|{#1}\|_{\mathsf{op}}}
\title{The Log-Shift Penalty for Adaptive Estimation
of Multiple Gaussian Graphical Models}
\author{
Yuancheng Zhu and Rina Foygel Barber
\\\small
Department of Statistics, University of Chicago}
\date{}
\begin{document}

\maketitle

\begin{abstract}
Sparse Gaussian graphical models characterize sparse dependence relationships
 between random variables in a network. 
 To estimate multiple related Gaussian graphical models on the same 
set of variables, we formulate a hierarchical model, which leads to an optimization problem
with a nonconvex log-shift penalty function.
We show that under mild conditions the optimization problem is convex despite the inclusion
of a nonconvex penalty, and derive an efficient
optimization algorithm. Experiments on both synthetic and real data show that the proposed 
method is able to achieve good selection and estimation performance simultaneously,
because the nonconvexity of the log-shift penalty allows 
for weak signals to be thresholded to zero
without excessive shrinkage on the strong signals.
\end{abstract}

\section{Introduction and background}
For a set of variables $X_1,\dots,X_p$, a graphical model is commonly used 
to reflect sparse dependence structure among the variables. The presence of an edge $(i,j)$
reflects that variables $X_i$ and $X_j$ are dependent even after controlling for
the effects of the remaining variables.
If $X=(X_1,\dots,X_p)\sim N(\mu,\Sigma)$,
the resulting model is known as a Gaussian graphical
model (GGM), and
in this case the edges (i.e.\ conditional dependencies) correspond to nonzero
entries in the precision matrix, $\Omega=\Sigma^{-1}$. The log-likelihood
for  $\Omega$ after observing $n$ iid draws of $X$ is given by
\[L(\Omega) =  \frac{n}{2}\log\det(\Omega)
   - \frac{n}{2}\inner{S}{\Omega}\;,\]
  where $S$ is the sample covariance of the $n$ iid observations. These types of 
models arise in a wide range of applications, including genetics (modeling interactions
among gene expression levels), finance (finding interactions between different
stock prices), and social networks (modeling relationships among people, spread
of information or disease, etc).

In a high-dimensional setting, where we observe $n$ iid realizations of $X$ for 
a sample size $n<p$, the sparsity of the precision matrix allows us to accurately
estimate the distribution of $X$ even though $\Sigma=\mathsf{Cov}(X)$ is in general
not identifiable from $n<p$ samples. A well-studied convex approach
 to finding $\Omega=\Sigma^{-1}$ is the graphical
  Lasso~\cite{friedman2008sparse}, which calculates\footnote{
  In some works in the literature, $\norm{\Omega}_1=\sum_{ij}|\Omega_{ij}|$ is penalized,
  i.e.\ the diagonal elements are not excluded from the penalty, but we exclude
  them to facilitate comparison with our work.}
  \begin{equation}\label{eqn:glasso}\widehat{\Omega}_{\mathsf{glasso}}=
  \argmin_{\Omega\succeq 0}\Big\{-L(\Omega)
     +\gamma\sum_{i<j}|\Omega_{ij}|\Big\}\;,\end{equation}
 The penalty term
promotes sparsity---due to the shrinkage on the off-diagonal
entries of $\Omega$, many of the $\Omega_{ij}$'s (for $i\neq j$) will be zero when $\gamma$ 
is sufficiently large. Under some conditions, the graphical Lasso is consistent
for edge selection in sparse models, even at sample size $n\ll p$ \cite{ravikumar2011high}.

\paragraph{Multiple graphs} In some applications, we may have multiple sets of observations with related (but
not necessarily identical) covariance structures, for instance when the same variables are measured
across different settings (such as gene expression levels in healthy vs in cancerous
 tissues~\cite{danaher2013joint} or across different phases of an organism's life
 cycle~\cite{kolar2010estimating}). Suppose that we observe data
 from $K$ different GGMs with similar sparsity structures,
 and would like to
  estimate the $K$ precision matrices $\Omega^{(1)},\dots,\Omega^{(K)}$ jointly.
Let   $L_k(\Omega^{(k)})$ be the log-likelihood for the $k$th data set given by
\[L_k(\Omega^{(k)}) = \frac{n_k}{2}\log\det(\Omega^{(k)}) - \frac{n_k}{2}\inner{S^{(k)}}{\Omega^{(k)}}\]
for $k$th sample size $n_k$ and $k$th sample covariance matrix $S^{(k)}$.
   \citet{danaher2013joint} propose the group graphical Lasso:
 \begin{equation}\label{eqn:GGL}\widehat{\mathbf{\Omega}}_{\mathsf{GGL}}=
  \argmin_{\mathbf{\Omega}\in\mathcal{S}_p}\Big\{-\sum_k L_k(\Omega^{(k)})
   + \gamma\sum_{i<j}\left[\nu \norm{\mathbf{\Omega}_{ij}}_1+(1-\nu)\norm{\mathbf{\Omega}_{ij}}_2\right]\Big\}\;,\end{equation}
where
 $\mathcal{S}_p$ is the feasible set of positive semidefinite matrix sequences,
\[\mathcal{S}_p=\left\{\mathbf{\Omega}=(\Omega^{(1)},\dots,\Omega^{(K)})\ : \ \Omega^{(k)}\in\R^{p\times p},\Omega^{(k)}\succeq 0\right\}\;,\]
and $\mathbf{\Omega}_{ij} = (\Omega^{(1)}_{ij},\dots,\Omega^{(K)}_{ij})$ is the vector
of coefficients at position $(i,j)$ across the $K$ settings. If $\nu=1$, the solution $\widehat{\mathbf{\Omega}}_{\mathsf{GGL}}$ reduces to performing a graphical Lasso
on each data set $k=1,\dots,K$;  no information is shared across the $K$ tasks.
 At the other extreme, for $\nu=0$, the penalty $\sum_{i<j}\norm{\mathbf{\Omega}_{ij}}_2$ 
ensures identical sparsity patterns across the $K$ 
estimated precision matrices.

\paragraph{A nonconvex approach} For the single graph setting ($K=1$), recent work by \citet{wong2013adaptive} proposes an adaptive, nonconvex approach, defined by a hierarchical model for each $\Omega_{ij}$:
\begin{align}
\label{eqn:wong_model}\tau_{ij}&\propto 1/\tau_{ij}\text{ for all $i<j$ (an improper prior)},\\
\notag\Omega_{ij}|\tau_{ij}&\sim \mathcal{N} \left(0,\tau_{ij}\right)\text{ for all $i<j$},\\
\notag X|\mu,\Omega&\sim \mathcal N\big(\mu,\Omega^{-1}\big)\text{ iid for each observation}.
\end{align}
This hierarchical model can be viewed as a graphical model version of the Bayesian Lasso 
introduced by \citet{park2008bayesian}. 
Marginalizing over $\tau_{ij}$, this induces a marginal (improper) density $\propto 1/|\Omega_{ij}|^2$, leading
to the MAP estimation problem
\begin{equation}\label{eqn:wong_MAP}
\widehat{\Omega}_{\mathsf{adaptive}}=\argmin_{\Omega\succeq 0} \Big\{-L(\Omega) + \sum_{i<j}\log(|\Omega_{ij}|^2)\Big\}\;. \end{equation}
This procedure is adaptive because, due to the concavity of the log penalty,
 large entries $\Omega_{ij}$ suffer less shrinkage when estimated, as compared to an $\ell_1$-norm penalty
 like the graphical Lasso.
Empirically, 
\citet{wong2013adaptive} find that the adaptive nonconvex penalty leads to improvements
relative to the graphical Lasso \eqnref{glasso}
in terms of accurate estimation and support recovery. However, the optimization
problem \eqnref{wong_MAP} is in general nonconvex and may have local minima.

\paragraph{Contributions} In the work presented here, we formulate a hierarchical model for multiple
graphs, and derive an optimization problem corresponding to finding the  maximum a posteriori (MAP) estimate
for $\mathbf{\Omega}$. The resulting optimization problem combines a likelihood term with
a nonconvex penalty,
leading to reduced shrinkage on edges with strong signals (thus improving over convex-penalty
methods).

 Crucially, even with the nonconvex penalty, our
optimization problem is convex under some mild conditions, thus avoiding
issues with local minima.
Furthermore, we find that the optimization
speedup results of \citet{danaher2013joint} extend to our method. Empirically, our method
is able to simultaneously identify the nonzero edges in a graph (model selection) and estimate the
parameters on these edges---this is a strong advantage of our nonconvex penalty, which
is able to produce a sparse solution while not imposing strong shrinkage on large nonzero estimated values,
while convex-penalty methods generally cannot achieve both at the same tuning parameter value.

\subsection{Outline}
The remainder of this paper is organized as follows. We introduce our method in \secref{method},
which gives a hierarchical model for the $K$ linked GGMs, and derives an objective function to find
the maximum a posteriori (MAP) estimate for $\mathbf{\Omega}=(\Omega^{(1)},\dots,\Omega^{(K)})$.
 We discuss the sparsity and shrinkage
properties of our method, in particular as compared to the group graphical Lasso, in \secref{shrinkage}.
In \secref{optim} we discuss optimization for the objective function defined by our method,
and in particular find conditions that lead to a convex optimization problem that can be split
into smaller subproblems (connected components of the graphs); proofs for the results
in this section can be found in \appref{proofs}. We present experiments
on simulated data and on stock price data in \secref{exper}. Finally, we conclude with a brief
discussion of our work and of future directions in \secref{discuss}.

\section{Methodology}\label{sec:method}

\subsection{A hierarchical model for multiple GGMs}\label{sec:hierarchical}
Consider the following hierarchical models for $K$ Gaussian graphical models with $p$ nodes each:
\begin{align}
\label{eqn:model1}\tau_{ij}&\sim\text{InverseGamma}(\alpha,\beta)\text{ for all $i<j$},\\
\notag\Omega_{ij}^{(k)}|\tau_{ij}&\sim\text{Laplace}\left(\tau_{ij}\right)\text{ for all $k$, for all $i<j$},\\
\notag X^{(k)}|\mu^{(k)},\Omega^{(k)}&\sim \mathcal N\big(\mu^{(k)},(\Omega^{(k)})^{-1}\big)\text{ for all $k$}.
\end{align}
We place a flat prior on $\mu^{(k)}$ and on the diagonal entries $\Omega^{(k)}_{ii}$. Of course, we must require $\Omega^{(k)}\succeq 0$ for each $k$. We may also choose to allow improper priors for $\tau_{ij}$ by allowing $\alpha$ and/or $\beta$ to be zero.

This hierarchical model characterizes
our prior belief regarding shared structure across the $K$ graphs. The common structure across the graphs is governed by the shared parameter $\tau_{ij}$ for the weights on the same edge in different graphs. The hyperparameters $\alpha$ and $\beta$ control the magnitude and the variation
of the $\tau_{ij}$'s and thus the sparsity pattern of the graphs.

\paragraph{Marginal distribution of $\mathbf{\Omega}$ given $\tau$}
We now calculate the marginal prior density of $\mathbf{\Omega}$:
\begin{align}
p(\mathbf{\Omega})
\notag&\propto\one{\mathbf{\Omega}\in\mathcal{S}_p}\cdot \prod_{i<j}\int_{\tau_{ij}} \left[\prod_{k=1}^K p(\Omega_{ij}^{(k)}|\tau_{ij})\right]p(\tau_{ij})\; \mathsf{d}\tau_{ij}\\
\notag&\propto\one{\mathbf{\Omega}\in\mathcal{S}_p}\cdot \prod_{i<j}\int_{\tau_{ij}} \left[\prod_{k=1}^K\tau_{ij}^{-1} e^{-|\Omega^{(k)}_{ij}|/\tau_{ij}}\right]\cdot \tau_{ij}^{-\alpha-1}e^{-\beta/\tau_{ij}}\; \mathsf{d}\tau_{ij}\\
\notag&=\one{\mathbf{\Omega}\in\mathcal{S}_p}\cdot \prod_{i<j}\int_{\tau_{ij}}\tau_{ij}^{-K-\alpha-1} e^{-(\beta+\norm{\mathbf{\Omega}_{ij}}_1)/\tau_{ij}}\; \mathsf{d}\tau_{ij}\\
\label{eqn:omega_marginal}&\propto\one{\mathbf{\Omega}\in\mathcal{S}_p}\cdot\prod_{i<j} (1+\norm{\mathbf{\Omega}_{ij}}_1/\beta)^{-(\alpha+K)}\;,
\end{align}
where the last step is obtained by marginalizing over $\tau_{ij}$
and dividing by the constant $\beta^{-(\alpha+K)}$. When $K=1$, even though our hierarchical model takes a different form than the model \eqnref{wong_model} proposed by \cite{wong2013adaptive}, we obtain the same marginal distribution of $\mathbf{\Omega}$ when we set $\alpha=1$ and $\beta=0$. However, we will show later on that choosing nonzero $\beta$ will allow for a convex optimization problem.

\paragraph{The posterior MAP}
Combining the marginal prior on $\mathbf{\Omega}$ \eqnref{omega_marginal} with the log-likelihoods $L_k(\Omega^{(k)})$ from  the $K$ data sets, we would like to calculate the maximum a posteriori (MAP) estimate:
\begin{equation}\label{eqn:obj}
\widehat{\mathbf{\Omega}}=\argmin_{\mathbf{\Omega}\in\mathcal{S}_p}\Big\{  - \sum_k L_k(\Omega^{(k)})+\gamma\sum_{i<j}\beta\log(1+\|\mathbf{\Omega}_{ij}\|_1/\beta)\Big\}\;,
\end{equation}
where $\gamma=\frac{\alpha+K}{\beta}$ (we introduce this reparametrization for later convenience).
This penalized likelihood function combines a convex negative-log-likelihood term with a nonconvex ``log-shift'' penalty.
While the underlying hierarchical model requires $\gamma\beta\geq K$ by construction, we relax this to $\gamma\geq 0$.

\paragraph{A generalization}
We can also consider replacing $\norm{\mathbf{\Omega}_{ij}}_1$ in \eqnref{obj} with any convex regularizer $f(\mathbf{\Omega}_{ij})$, which leads to the optimization problem
\begin{equation}\label{eqn:obj1}
\widehat{\mathbf{\Omega}}=\argmin_{\mathbf{\Omega}\in\mathcal{S}_p} F(\mathbf{\Omega})\text{\quad where \quad}F(\mathbf{\Omega})\coloneqq -\sum_k L_k(\Omega^{(k)})+\gamma\sum_{i<j} \beta\log(1+f(\mathbf{\Omega}_{ij})/\beta)\;.
\end{equation}
As an important example, we can consider a (sparse) group Lasso penalty on each $\mathbf{\Omega}_{ij}$:
\[f(\mathbf{\Omega}_{ij}) = \nu\norm{\mathbf{\Omega}_{ij}}_1 + (1-\nu)\norm{\mathbf{\Omega}_{ij}}_2\;.\]
In fact, the penalized likelihood optimization problem in \eqnref{obj1} can be motivated by a generalization of our hierarchical model from \secref{hierarchical}. If $f(\mathbf{\Omega}_{ij})=\norm{\mathbf{\Omega}_{ij}}$ for some norm $\norm{\cdot}$, take
\begin{align}
\label{eqn:model2}\tau_{ij}&\sim\text{InverseGamma}(\alpha,\beta)\text{ for all $i<j$},\\
\notag\mathbf{\Omega}_{ij}|\tau_{ij}&\propto \tau_{ij}^{-K}e^{-\norm{\mathbf{\Omega}_{ij}}/\tau_{ij}}\text{ for all $i<j$},\\
\notag X^{(k)}|\Omega^{(k)}&\sim  \mathcal N\big(\mu^{(k)},(\Omega^{(k)})^{-1}\big)\text{ for all $k$}.
\end{align}
Next, marginalizing over $\tau$ (following similar steps as in \eqnref{omega_marginal} earlier),
\[p(\mathbf{\Omega})\propto \one{\mathbf{\Omega}\in\mathcal{S}_p}\cdot \prod_{i<j} (1+\norm{\mathbf{\Omega}_{ij}}/\beta)^{-(\alpha+K)}\;.\]
Combining this with the likelihood terms yields the penalized optimization problem \eqnref{obj1}.

\subsection{Sparsity and shrinkage of $\mathbf{\Omega}$}\label{sec:shrinkage}
We next examine the effects of the parameters $\gamma$ and $\beta$ in the log-shift objective function \eqnref{obj1}, which arise from $\alpha$ and $\beta$ in the hierarchical model \eqnref{model2}. To understand their role in inducing sparsity and shrinkage in $\mathbf{\Omega}$, we first consider the function
\[g_\beta(x) =  \beta \log(1+|x|/\beta)\;.\]
In a sparse regression setting, this type of penalty function has been studied by  \citet{candes2008enhancing} and others in the context of reweighted $\ell_1$ minimization, and was found to preserve the desirable sparsity properties of $\ell_1$ regularization while reducing the amount of shrinkage on large coefficients. 

The penalty function $g_\beta(x)$ behaves like a $\ell_1$ penalty when $|x|/\beta\approx 0$, which we can see by taking a local linear approximation to the log function:
\[\log(1+|x|/\beta)\approx |x|/\beta \ \Rightarrow \ g_\beta(x)\approx |x|\;.\]
On the other hand, as $|x|/\beta$ grows large, the concavity of the log function becomes apparent, and therefore there is less shrinkage on large values of $x$. See \figref{nonconvex} for an illustration.
\begin{figure}
\centering
\includegraphics[width=3.5in]{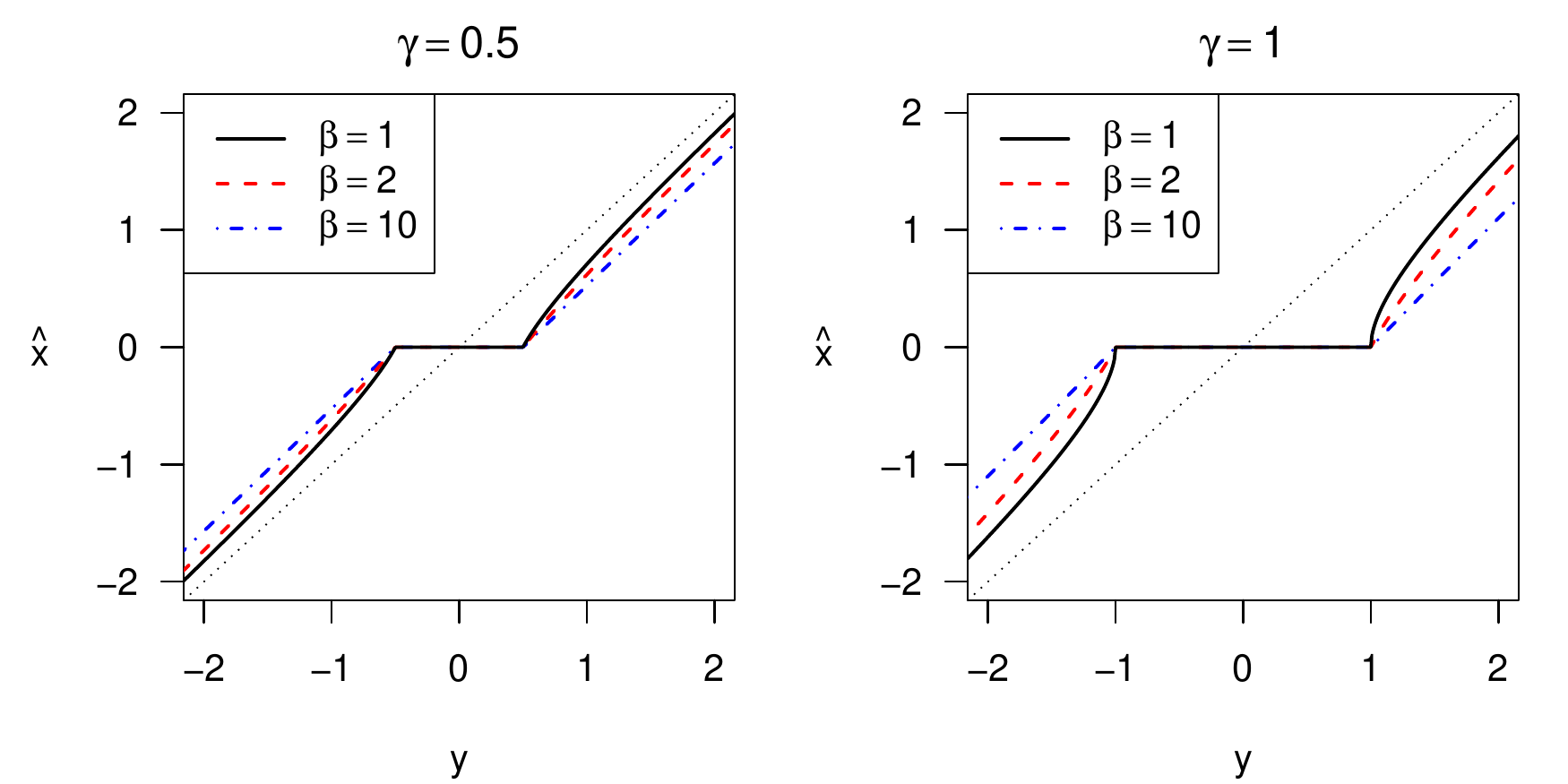}
\caption{Sparsity and shrinkage behavior of the solution $\hat{x}=\argmin\left\{\frac{1}{2}(y-x)^2 + \gamma g_\beta(x)\right\}$. Note that $\gamma$ affects the point at which the solution is thresholded to zero, while $\beta$ controls the nonconvexity (and therefore, the shrinkage) for nonzero solutions.}
\label{fig:nonconvex}
\end{figure}

Next, we return to our prior distribution on $\mathbf{\Omega}$. Comparing the penalized likelihood function \eqnref{obj1} with our calculations with $g_\beta(\cdot)$ above, we can interpret the parameters $\beta$ and $\gamma$ in \eqnref{obj1} as follows:
\begin{itemize}
\item $\gamma$ controls the amount of penalization on $\mathbf{\Omega}$, and thus the sparsity level of the solution.
\item $\beta$ controls the nonconvexity of the penalty, with small $\beta$ yielding reduced shrinkage in the estimate of $\mathbf{\Omega}$ (but possible nonconvexity of the objective function), while $\beta\rightarrow\infty$ causes the penalty term to approach $\gamma\sum_{i<j} f(\mathbf{\Omega}_{ij})$. 
\end{itemize}

\subsection{Other related work}
The graphical Lasso \cite{friedman2008sparse} and group graphical Lasso \cite{danaher2013joint} methods, discussed above in \eqnref{glasso} and \eqnref{GGL}, both propose estimation of $\mathbf{\Omega}$ via a convex penalty. Our method may be viewed as a generalization of the group graphical Lasso, which is obtained by setting $\beta=\infty$ in our log-shift penalty.

Turning to nonconvex methods, in addition to \citet{wong2013adaptive}'s model described in \eqnref{wong_model} above, we are aware of several other methods using nonconvex regularization, all of which allow for reduced shrinkage on large entries, but may potentially lead to nonconvex optimization problems. First, for estimation of a single graph, \citet{fan2009network} apply an adaptive Lasso (reweighted $\ell_1$) penalty to the GGM setting, optimizing
\[\argmin_{\Omega\succeq 0}\Big\{-L(\Omega) + \gamma\sum_{i<j} {|\Omega_{ij}|}/{|\tilde{\Omega}_{ij}|^\alpha}\Big\}\;,\]
where $\tilde{\Omega}$ is some initial estimate of $\Omega$. In fact, for $\alpha=1$, this reweighted $\ell_1$ penalty is can be viewed as a single iteration towards solving \citet{wong2013adaptive}'s MAP estimation problem \eqnref{wong_MAP}, although this is not the  approach taken in \cite{wong2013adaptive} (see  \cite{candes2008enhancing} for the sparse regression setting). \citet{fan2009network} also examine a SCAD penalty on each $|\Omega_{ij}|$, which behaves similarly.

Finally, in the multiple graph setting, \citet{guo2011joint} propose the optimization problem
 \begin{equation}\label{eqn:guo}\widehat{\mathbf{\Omega}}_{\mathsf{sqrt}}=
  \argmin_{\mathbf{\Omega}\in\mathcal{S}_p}\Big\{-\sum_k L_k(\Omega^{(k)})
   + \gamma\sum_{i<j}\sqrt{\norm{\mathbf{\Omega}_{ij}}_1}\Big\}\;.\end{equation}
This nonconvex penalty
encourages similar sparsity patterns in the $K$ graphs, but does not allow for tuning the amount of nonconvexity or the balance between shared support vs different support.

\section{Convexity and optimization}\label{sec:optim}
In this section, we derive a simple condition on the parameters $\beta$ and $\gamma$ in our log-shift method \eqnref{obj1} that guarantees the convexity of the objective function $F(\mathbf{\Omega})$ over a bounded set. We then develop a majorization-minimization algorithm for finding the global minimum, and a preprocessing step where the graphs are split into connected components, allowing for smaller optimization problems that can be solved in parallel.
While we are primarily interested in regularizers of the form $f(\cdot)=\nu\norm{\cdot}_1+(1-\nu)\norm{\cdot}_2$, our results apply more generally to any convex regularizer $f(\cdot)$.

\subsection{Convexity}
To ensure that $F(\mathbf{\Omega})$ is convex, we will place a bound on $\mathbf{\Omega}$  to obtain strong convexity of the likelihood term, while placing a lower bound on $\beta$ to control the nonconvexity of the penalty term. The result below may be viewed as an application of \citet{loh2013regularized}'s results on nonconvex regularizers.

The condition that we require on $\mathbf{\Omega}$ is mild.  For any $b=(b_1,\dots,b_K)\in\R^K_+$, define
\[\mathcal{S}_p(b)=\left\{\mathbf{\Omega}\in\mathcal{S}_p \ : \ \opnorm{\Omega^{(k)}}\leq b_k\text{ for }k=1,\dots,K\right\}\;,\]
where $\opnorm{\cdot}$ is the matrix operator norm (largest singular value). This is a reasonable nondegeneracy condition on the $K$ graphical models underlying the data.

\begin{theorem}\label{thm:obj_fn_convex}
If $f(\cdot)$ is convex, nonnegative, and $L$-Lipschitz, and if
\begin{equation}\label{eqn:convex_condition}\beta\geq \frac{\gamma L^2}{2} \cdot \max_k \frac{b_k^2}{n_k}\;,\end{equation}
then
 $F(\mathbf{\Omega})$ is convex over $\mathbf{\Omega}\in\mathcal{S}_p(b)$. If \eqnref{convex_condition} is satisfied with a strict inequality, then we obtain strict convexity.
\end{theorem}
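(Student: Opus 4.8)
The plan is to prove that $F$ is convex on the convex set $\mathcal{S}_p(b)$ by showing that it is convex along every line segment contained in $\mathcal{S}_p(b)$, i.e.\ that $t\mapsto F(\mathbf{\Omega}+t\mathbf{\Delta})$ is convex for every chord direction $\mathbf{\Delta}=(\Delta^{(1)},\dots,\Delta^{(K)})$ of symmetric matrices. I would split $F$ from \eqref{eqn:obj1} into the negative log-likelihood $-\sum_k L_k$, which I will show is \emph{strongly} convex on the bounded set $\mathcal{S}_p(b)$, and the log-shift penalty, whose \emph{nonconvexity} I will bound. The condition \eqref{eqn:convex_condition} then emerges by forcing the strong-convexity modulus of the likelihood to dominate the nonconvexity of the penalty, exactly in the spirit of \citet{loh2013regularized}.

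First I would control the likelihood. The linear term $\frac{n_k}{2}\inner{S^{(k)}}{\Omega^{(k)}}$ has vanishing curvature, so the second-order behavior comes entirely from $-\frac{n_k}{2}\log\det\Omega^{(k)}$, whose second directional derivative along $\Delta^{(k)}$ equals $\frac{n_k}{2}\trace\big((\Omega^{(k)})^{-1}\Delta^{(k)}(\Omega^{(k)})^{-1}\Delta^{(k)}\big)=\frac{n_k}{2}\norm{(\Omega^{(k)})^{-1/2}\Delta^{(k)}(\Omega^{(k)})^{-1/2}}_F^2$. Since $\opnorm{\Omega^{(k)}}\le b_k$, the smallest eigenvalue of $(\Omega^{(k)})^{-1}$ is at least $1/b_k$, so this is bounded below by $\frac{n_k}{2b_k^2}\norm{\Delta^{(k)}}_F^2$. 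Thus $-\sum_k L_k$ is strongly convex on $\mathcal{S}_p(b)$ with per-block modulus $\frac{n_k}{2b_k^2}$.

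Next I would bound the nonconvexity of the penalty. Writing $\phi(u)=\beta\log(1+u/\beta)$, each summand is $\phi(f(\mathbf{\Omega}_{ij}))$, a concave increasing function of a convex, nonnegative, $L$-Lipschitz map. Because $f$ need not be differentiable, instead of differentiating directly I would restrict to a line $t\mapsto\mathbf{\Omega}_{ij}+t\mathbf{d}$ with $\norm{\mathbf{d}}=1$: there $\psi(t)=f(\mathbf{\Omega}_{ij}+t\mathbf{d})$ is convex and $L$-Lipschitz, and the one-dimensional chain rule gives $(\phi\circ\psi)''=\phi''(\psi)\,(\psi')^2+\phi'(\psi)\,\psi''\ge\phi''(\psi)\,(\psi')^2\ge-\tfrac{1}{\beta}L^2$, using $\phi'\ge 0$, $\psi''\ge 0$, $|\psi'|\le L$, and $\phi''\ge -1/\beta$ (the minimum $-1/\beta$ being attained at $u=0$). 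Hence each summand is weakly convex with parameter $L^2/\beta$, so the penalty plus $\gamma\frac{L^2}{2\beta}\sum_{i<j}\norm{\mathbf{\Delta}_{ij}}^2$ has nonnegative curvature along every line.

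Finally I would combine the two estimates. Along any chord the penalty contributes curvature at least $-\gamma\frac{L^2}{\beta}\sum_{i<j}\norm{\mathbf{\Delta}_{ij}}^2$ and the likelihood at least $\sum_k\frac{n_k}{2b_k^2}\norm{\Delta^{(k)}}_F^2$; matching these term by term over the off-diagonal entries yields a per-coordinate inequality of the form \eqref{eqn:convex_condition}, and a strict inequality makes every chord strictly convex. The main obstacle I anticipate is twofold: handling the nonsmoothness of $f$ cleanly, which the line-restriction above is designed to resolve (so that no classical Hessian of the penalty is ever needed), and tracking the exact constant. In particular, the interplay between the symmetry of the matrices—each off-diagonal entry contributing to $\norm{\Delta^{(k)}}_F^2$ through both $(i,j)$ and $(j,i)$—and the block structure of the penalty—each block $\mathbf{\Omega}_{ij}$ penalized only once—is precisely where the numerical factor in \eqref{eqn:convex_condition} is pinned down, and is the step most prone to an off-by-a-constant error.
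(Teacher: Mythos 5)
Your decomposition is exactly the paper's: lower-bound the curvature of $-\sum_k L_k$ by $\sum_k\frac{n_k}{2b_k^2}\norm{\Delta^{(k)}}_F^2$ using $\opnorm{\Omega^{(k)}}\le b_k$, upper-bound the concavity of each penalty summand by $\frac{L^2}{\beta}$ per unit squared direction using the Lipschitz constant of $f$ and $\phi''\ge-1/\beta$, and balance the two. One minor caveat: restricting to a line does not make $\psi(t)=f(\mathbf{\Omega}_{ij}+t\mathbf{d})$ twice differentiable, so the identity $(\phi\circ\psi)''=\phi''(\psi)(\psi')^2+\phi'(\psi)\psi''$ is not literally available; this is repairable (read $\psi''$ as the nonnegative distributional second derivative of a one-variable convex function, or argue as the paper's \lemref{log_is_convex} does via a secant inequality that only ever differentiates $\log$), but as written it is a gesture rather than a proof.

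The substantive gap is the constant, precisely the step you flagged and did not carry out. Combining your two bounds term by term, and using $\norm{\Delta^{(k)}}_F^2\ge 2\sum_{i<j}(\Delta^{(k)}_{ij})^2$ for symmetric $\Delta^{(k)}$, the directional curvature of $F$ is at least
\[
\sum_k\frac{n_k}{2b_k^2}\norm{\Delta^{(k)}}_F^2-\frac{\gamma L^2}{\beta}\sum_{i<j}\norm{\mathbf{\Delta}_{ij}}_2^2
\;\ge\;\sum_k\Big(\frac{n_k}{b_k^2}-\frac{\gamma L^2}{\beta}\Big)\sum_{i<j}\big(\Delta^{(k)}_{ij}\big)^2\;,
\]
and nonnegativity for every $k$ requires $\beta\ge\gamma L^2\max_k b_k^2/n_k$ --- a factor of $2$ more restrictive than \eqnref{convex_condition}. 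So the estimates you state prove the theorem only with $\gamma L^2$ in place of $\gamma L^2/2$; no term-by-term matching of these particular bounds recovers the advertised constant, and both worst cases (off-diagonal $\Delta$ aligned with $\Omega^{(k)}\approx b_k\ident$ for the likelihood, $f$ locally linear with slope $L$ for the penalty) are essentially attainable. For what it is worth, the paper's own proof reaches the stated constant only through the intermediate claim that $\Omega\mapsto\inner{\Omega}{S^{(k)}}-\log\det(\Omega)-\norm{\Omega}_F^2/b_k^2$ is convex; since the Hessian of $\norm{\Omega}_F^2$ is $2\ident$, the eigenvalue bound $\lambda_{\min}(\nabla^2)\ge b_k^{-2}$ only licenses subtracting $\norm{\Omega}_F^2/(2b_k^2)$, so the published argument carries the same missing factor of $2$ in the opposite direction. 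Either the condition should read $\beta\ge\gamma L^2\max_k b_k^2/n_k$, or a genuinely sharper curvature estimate is needed; neither your sketch nor the appendix as written supplies one.
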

We note that this allows for a value of $\beta$ that is very small if the sample sizes $n_k$ are large, that is, even if the penalty is highly nonconvex, as desired to avoid excessive shrinkage on strong signals. The proof of this theorem, given in \appref{proofs}, simply shows that the strong convexity of the likelihood term in $F(\mathbf{\Omega})$ is sufficient to counterbalance the concavity of the log penalty. 

\subsection{Optimization via majorization-minimization}\label{sec:majmin}
To minimize $F(\mathbf{\Omega})$ we use majorization-minimization \cite{hunter2004tutorial}. Let $\tilde{\mathbf{\Omega}}$ be our current estimate of $\widehat{\mathbf{\Omega}}$. Since $\log(\cdot)$ is concave, we   bound $\log(1+f(\mathbf{\Omega}_{ij})/\beta)$ by the linear approximation centered at $\tilde{\mathbf{\Omega}}_{ij}$:
\[\log\big(1+f(\mathbf{\Omega}_{ij})/\beta\big)\leq \log\big(1+f(\tilde{\mathbf{\Omega}}_{ij})/\beta\big) + \frac{f(\mathbf{\Omega}_{ij})/\beta - f(\tilde{\mathbf{\Omega}}_{ij})/\beta}{1+f(\tilde{\mathbf{\Omega}}_{ij})/\beta}\;.\]
Then the objective function is bounded as
\[F(\mathbf{\Omega}) \leq  \underbrace{-\sum_k L_k(\Omega^{(k)})+ \gamma\sum_{i<j} \frac{ f(\mathbf{\Omega}_{ij}) }{1+f(\tilde{\mathbf{\Omega}}_{ij})/\beta}}_{\eqqcolon F(\mathbf{\Omega};\tilde{\mathbf{\Omega}})} + \left(\text{\begin{tabular}{c}terms that are\\ constant with\\ respect to $\mathbf{\Omega}$\end{tabular}}\right)\;,\]
with equality at $\mathbf{\Omega}=\tilde{\mathbf{\Omega}}$. Note that $F(\mathbf{\Omega};\tilde{\mathbf{\Omega}})$ is a convex function of $\mathbf{\Omega}$. Therefore, to find $\widehat{\mathbf{\Omega}}$,
\begin{enumerate}
\item Initialize $\mathbf{\Omega}_{[0]} =(\zeros_{p\times p},\dots,\zeros_{p\times p})$  (or any other initial value).
\item For $t=1,2,\dots$, solve the convex optimization problem 
\begin{equation}\label{eqn:majmin_step}\mathbf{\Omega}_{[t]} = \argmin_{\mathbf{\Omega}\in \mathcal{S}_p(b)} F(\mathbf{\Omega}; \mathbf{\Omega}_{[t-1]})\;.\end{equation}
\item Stop when some convergence criterion has been reached.
\end{enumerate}

For optimizing \eqnref{majmin_step}, if $f(\cdot)$ is chosen to be the sparse group Lasso regularizer
\[f(\cdot)=\nu\norm{\cdot}_1+(1-\nu)\norm{\cdot}_2\;,\]
 then the step \eqnref{majmin_step} is equivalent to a weighted group graphical Lasso problem \cite{danaher2013joint}, but with an additional constraint that $\mathbf{\Omega}_{[t]}\in\mathcal{S}_p(b)$; this constraint can be added to the ADMM algorithm for group graphical Lasso given in \cite{danaher2013joint} with no additional computational cost.

If the objective function $F(\mathbf{\Omega})$ is convex over $\mathcal{S}_p(b)$---that is, if our choices of $\beta$, $\gamma$, and $b$ satisfy the condition \eqnref{convex_condition} of \thmref{obj_fn_convex}---then majorization-minimization is guaranteed to converge to a globally optimal solution $\widehat{\mathbf{\Omega}}\in\argmin_{\mathbf{\Omega}\in\mathcal{S}_p(b)}F(\mathbf{\Omega})$ \cite{wu1983convergence}. In practice, we may choose to remove the bound on the spectral norms, or equivalently, explore concavity of the penalty beyond what is allowed in the convexity condition \eqnref{convex_condition}, since lower values of $\beta$ may perform better empirically.

\subsection{Separation into connected components}
For the graphical Lasso \eqnref{glasso}, \citet{witten2011new} proved that the connected components of the solution $\widehat{{\Omega}}_{\mathsf{glasso}}$ can be identified in a preprocessing step that simply requires screening for sample correlations $S_{ij}$ that exceed the penalty parameter value $\gamma$. This allows for significantly faster optimization of the graphical Lasso. Theorem 2 of \citet{danaher2013joint} extends this result to the group graphical Lasso setting, by screening for any $i<j$ such that
\begin{equation}\label{eqn:block_GGL}\sqrt{\sum_k \left(n_k |S^{(k)}_{ij}|-\gamma\nu\right)_+^2}>\gamma(1-\nu)\end{equation}
and then solving separate optimization problems for each resulting connected component. Their results prove that the combined solution is a global minimizer of the group graphical Lasso \eqnref{GGL}.

This type of block-wise optimization can be extended to the nonconvex log-shift penalty:
\begin{theorem}\label{thm:block_sep}
Consider any partition $\mathcal{A}=\{A_1,\dots,A_m\}$ of $[p]$  into disjoint sets. Suppose that 
\begin{equation}\label{eqn:block_condition}-\gamma^{-1}\cdot \diag\{n_1,\dots,n_K\}\cdot \mathbf{S}_{ij}\in\partial f(\zeros)\text{ for all $i\not\sim_{\mathcal{A}} j$}\end{equation}
where $\mathbf{S}_{ij}=(S_{ij}^{(1)},\dots,S_{ij}^{(K)})$. If the conditions of \thmref{obj_fn_convex} are satisfied, then there exists some $\widehat{\mathbf{\Omega}}\in\argmin_{\mathbf{\Omega}\in\mathcal{S}_p(b)}F(\mathbf{\Omega})$ such that
$\widehat{\Omega}^{(k)}_{ij}=0$  for all $k$ and all $i\not\sim_{\mathcal{A}}j$. 
\end{theorem}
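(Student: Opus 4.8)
The plan is to construct a block-diagonal candidate solution by solving one optimization problem per block of $\mathcal{A}$, and then to certify that this candidate is a global minimizer of $F$ over all of $\mathcal{S}_p(b)$ by checking a first-order optimality condition---which is legitimate precisely because the conditions of \thmref{obj_fn_convex} make $F$ convex on $\mathcal{S}_p(b)$. First I would observe that when $F$ is restricted to matrices that are block-diagonal with respect to $\mathcal{A}$, it separates completely into $m$ independent subproblems. Indeed, for such $\mathbf{\Omega}$ we have $\log\det(\Omega^{(k)})=\sum_\ell\log\det(\Omega^{(k)}_{A_\ell})$ and $\inner{S^{(k)}}{\Omega^{(k)}}=\sum_\ell\inner{S^{(k)}_{A_\ell}}{\Omega^{(k)}_{A_\ell}}$, while the penalty contributes nothing from off-block positions since $\mathbf{\Omega}_{ij}=\zeros$ there and $\log(1+f(\zeros)/\beta)=0$. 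The same dimension-free inequality \eqnref{convex_condition} guarantees convexity of each subproblem, so each has a minimizer $\widehat{\Omega}^{(k)}_{A_\ell}$; assembling these gives a block-diagonal $\widehat{\mathbf{\Omega}}\in\mathcal{S}_p(b)$ that minimizes $F$ over the (convex) set of block-diagonal members of $\mathcal{S}_p(b)$.

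Next I would upgrade this to global optimality. Since $F$ is convex on $\mathcal{S}_p(b)$, it suffices to show the one-sided directional derivative $F'(\widehat{\mathbf{\Omega}};\mathbf{\Omega}-\widehat{\mathbf{\Omega}})\ge 0$ for every $\mathbf{\Omega}\in\mathcal{S}_p(b)$. Writing $\mathbf{D}=\mathbf{\Omega}-\widehat{\mathbf{\Omega}}$ and splitting it into its within-block part $\mathbf{D}^{\mathrm{in}}$ and off-block part $\mathbf{D}^{\mathrm{off}}$, the directional derivative decomposes additively: the smooth likelihood term is differentiable, so its contribution is linear in $\mathbf{D}$, and the penalty is separable across the pairs $(i,j)$. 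For the within-block part, the key point is that the block-diagonal part of any positive semidefinite matrix is again positive semidefinite with no larger operator norm (its diagonal blocks are principal submatrices), so $\widehat{\mathbf{\Omega}}+\mathbf{D}^{\mathrm{in}}$ is a feasible block-diagonal competitor; block-optimality of $\widehat{\mathbf{\Omega}}$ then yields $F'(\widehat{\mathbf{\Omega}};\mathbf{D}^{\mathrm{in}})\ge 0$.

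It remains to handle the off-block directions, which is where the screening hypothesis enters and which I expect to be the crux of the argument. Because $\widehat{\mathbf{\Omega}}$ is block-diagonal, each inverse $(\widehat{\Omega}^{(k)})^{-1}$ is also block-diagonal, so $\bigl[(\widehat{\Omega}^{(k)})^{-1}\bigr]_{ij}=0$ at every off-block position $i\not\sim_{\mathcal{A}}j$; consequently the gradient of $-\sum_k L_k$ in these directions reduces to the data term alone, contributing $\sum_k n_k S^{(k)}_{ij}D^{(k)}_{ij}$ at position $(i,j)$. For the penalty, I would compute that near $\mathbf{\Omega}_{ij}=\zeros$ the log-shift term $\gamma\beta\log(1+f(\mathbf{\Omega}_{ij})/\beta)$ agrees with $\gamma f(\mathbf{\Omega}_{ij})$ to first order, so its directional derivative at $\zeros$ equals $\gamma\sup_{v\in\partial f(\zeros)}\inner{v}{\mathbf{D}_{ij}}$; this identifies the relevant subdifferential at the origin as exactly $\gamma\,\partial f(\zeros)$. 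Hypothesis \eqnref{block_condition} says $-\gamma^{-1}\diag\{n_1,\dots,n_K\}\mathbf{S}_{ij}\in\partial f(\zeros)$, and evaluating the support function against this particular element shows that the off-block contribution at each $(i,j)$ is bounded below by zero. Summing the nonnegative within-block and off-block pieces gives $F'(\widehat{\mathbf{\Omega}};\mathbf{D})\ge 0$, certifying that $\widehat{\mathbf{\Omega}}$ is a global minimizer with the asserted off-block zeros. The main care I anticipate is in this off-block step: correctly recognizing that the nonconvexity of the log penalty is invisible to first order at the origin, so that the governing subdifferential is $\gamma\,\partial f(\zeros)$ rather than something $\beta$-dependent, and using block-diagonality of the inverse to erase the likelihood's coupling across blocks so that only $\mathbf{S}_{ij}$---and hence precisely the stated condition---remains.
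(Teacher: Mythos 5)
Your proposal is correct and follows essentially the same route as the paper: take the minimizer over the block-diagonal subset of $\mathcal{S}_p(b)$ (feasibility of the block-diagonal restriction via principal submatrices), handle within-block perturbations by that restricted optimality, kill the off-block $\log\det$ contribution via block-diagonality of the inverse, control the penalty at $\zeros$ by the subgradient condition \eqnref{block_condition}, and invoke the convexity from \thmref{obj_fn_convex} to upgrade to global optimality. The only difference is presentational: you phrase the certificate as a one-sided directional derivative, whereas the paper writes the same first-order argument as an explicit expansion $F(\widehat{\mathbf{\Omega}}+\Delta)\geq F(\widehat{\mathbf{\Omega}})-\O{\|\Delta\|^2}$ with Taylor remainders.
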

In particular, if $f(\cdot)=\nu\norm{\cdot}_1+(1-\nu)\norm{\cdot}_2$, then condition \eqnref{block_condition} is equivalent to \citet{danaher2013joint}'s condition \eqnref{block_GGL} for the group graphical Lasso.  Although our penalty is nonconvex, near zero it is approximately equal to the group graphical Lasso penalty (or, more generally, $\beta\log\big(1+f(\mathbf{\Omega}_{ij})/\beta\big)\approx f(\mathbf{\Omega}_{ij})$ when $\mathbf{\Omega}_{ij}\approx 0$). This allows us to extend the proof techniques of \cite{danaher2013joint} to this nonconvex penalty setting.
\thmref{block_sep} is proved in \appref{proofs}.

Based on this result, we now propose a faster algorithm for minimizing $F(\mathbf{\Omega})$.
\begin{enumerate}
\item Partition $[p]$ into sets $A_1,\dots,A_M$, the connected components of the adjacency matrix $C$:
\[C_{ij} = \onebr{-\gamma^{-1}\cdot \diag\{n_1,\dots,n_K\}\cdot \mathbf{S}_{ij}\in\partial f(\zeros)} \text{ for each }i<j\;.\]
\item For  $m=1,\dots,M$, use majorization-minimization (\secref{majmin}) to solve the $m$th block,
\[\widehat{\mathbf{\Omega}}_m = \argmin_{\mathbf{\Omega}\in\mathcal{S}_{p_m}(b)} F_m(\mathbf{\Omega})\text{\quad where $p_m=|A_m|$ and}\]
\[F_m(\mathbf{\Omega})=\sum_{k=1}^K\left[-\frac{n_k}{2}\log\det(\Omega^{(k)})+\frac{n_k}{2}\langle\Omega^{(k)},S^{(k)}_{A_m,A_m}\rangle\right]+\gamma\!\!\sum_{\substack{i<j\\ i,j\in A_m}} \!\!\beta\log(1+f(\mathbf{\Omega}_{ij})/\beta)\;.\]
\item $\widehat{\mathbf{\Omega}}$ concatenates the blocks:  $\widehat{\mathbf{\Omega}}_{A_m,A_m}=\widehat{\mathbf{\Omega}}_m$ for all $m$, and $\widehat{\mathbf{\Omega}}_{A_m,A_{m'}}=\zeros$ for all $m\neq m'$.
\end{enumerate}
If the convexity condition \eqnref{convex_condition} is satisfied, then \thmsref{obj_fn_convex} and \thmssref{block_sep} guarantee that the resulting solution $\widehat{\mathbf{\Omega}}$ is a global minimizer of $F(\mathbf{\Omega})$ over the set $\mathcal{S}_p(b)$.

 \section{Experiments}\label{sec:exper}

\subsection{Simulations}

\paragraph{Data and methods} We simulate $K=3$ tridiagonal precision matrices of dimension $p=100$, following the autoregressive (AR)
process example in \citet{fan2009network}. The precision matrices $\Omega^{(k)}$ have identical tridiagonal support, but have different nonzero values (see example 4.1 in \cite{fan2009network} for details).
For each $k$, we draw $n_k=40$ iid samples from the distribution $\mathcal N\big(0,(\Omega^{(k)})^{-1}\big)$.

 To implement our proposed method, we take
 $f(\cdot)=\nu\norm{\cdot}_1+(1-\nu)\norm{\cdot}_2$ and minimize the objective function \eqnref{obj1}
for different values of tuning parameters $(\gamma,\beta,\nu)$. We also test the group graphical Lasso \cite{danaher2013joint}, the graphical Lasso \cite{friedman2008sparse}, and \citet{guo2011joint}'s square-root method, for comparison.
\footnote{Computations for simulations and for the real data experiment were performed in R \cite{R} and used the \texttt{glasso} \cite{R_glasso} and \texttt{JGL} \cite{R_JGL} packages. Code for \citet{guo2011joint}'s method was obtained from the online supplementary material for \cite{danaher2013joint}, available at \url{http://onlinelibrary.wiley.com/journal/10.1111/(ISSN)1467-9868/homepage/76_2.htm}.}

\paragraph{Results} \figref{simulation}(a) displays results from 10 trials. 
Among the methods considered, 
the log-shift method attains the lowest error in recovering $\mathbf{\Omega}$
and simultaneously selects an appropriately low number of edges, when tuning
parameters are chosen judiciously (typically with a lower value of $\beta$, i.e.\ with 
high nonconvexity in the penalty). 
As $\beta$ increases, the performance of our method approaches that of the group graphical Lasso. 
In order to select appropriate tuning parameters for each method
in a data-driven way, we generate a validation data set of same size 
and compute the log likelihoods of the validation data using the estimated precision matrices. 
For our model, the estimate selected by the validation score achieves the minimum error measure,
and yields a total number of selected edges that is close to the number of true edges.

\begin{figure}[t]
\centering
\includegraphics[width=5.7in]{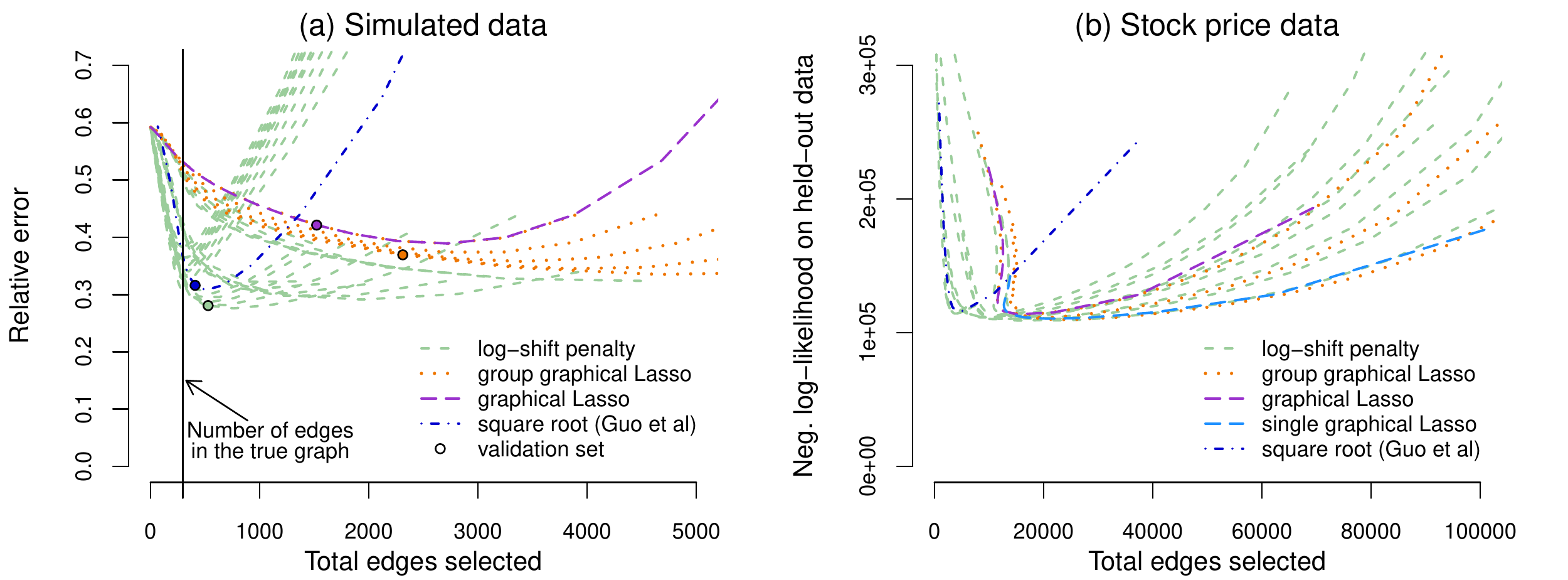}
\caption{Experiment results for estimating precision matrices based on simulated data and stock price data.
For each method, each line represents estimates with various values of $\gamma$ while fixing
other tuning parameters.
(a) Simulated data: relative error in estimating $\mathbf{\Omega}$
versus the total number of edges selected.  For each model, a held-out validation
set was used to select tuning parameter value(s), highlighted in the plot.
(b) Stock price data: negative log likelihood on held-out data versus the total number of edges selected.
Plot is best viewed in color.}
\label{fig:simulation}
\end{figure}

\subsection{Stock price data}

\paragraph{Data and methods} We next test our method on stock price data from Yahoo!~Finance.\footnote{Data available at \url{finance.yahoo.com}} We collected the daily closing prices for $p=432$ stocks that were consistently in the S\&P 500 index from January 1, 2003 to December 31, 2012. Let $S_{i,j}$ be the closing price for stock $j$ on day $i$, and $X_{i,j}=\log (S_{i,j}/S_{i-1,j})$ be the log return. We marginally transform the log returns of each stock to a normal distribution. Denoting the transformed data still as $X_{i,j}$, we treat $X_{i,\cdot}$ as independent observations, although they in fact form a time series. We divide the data into two time periods, one for before (2003--2007) and one for after (2009-2012) the 2008 financial crisis, and remove the data from 2008. The two sample sizes are $n_{\text{before}}=1257$ and $n_{\text{after}}=1005$. In the belief that the relationship between stocks might have changed during the financial crisis, we model the data as two GGMs with similar but non-identical precision matrices. We select 20\% of the data in each time period as training data, and hold out the remaining 80\% to evaluate the performances. We implement our log-shift method with $K=2$, and compare to the same existing methods as before. For an additional comparison, we also fit a single graphical Lasso to the combined data set.
 
\paragraph{Results} To evaluate the results, we calculate the likelihood of the held-out data under the fitted models for each method. Results are displayed in \figref{simulation}(b). While the various methods' best scores are similar for the log-shift, group graphical Lasso, and single graphical Lasso, the log-shift method is able to attain this best validation score with a substantially smaller number of selected edges relative to the convex methods, demonstrating the benefit of the nonconvex penalty.

\section{Discussion}\label{sec:discuss}

In this paper, we introduce a family of nonconvex penalty functions, called the log-shift function, for estimating multiple
related GGMs. It arises from a simple hierarchical model and generalizes existing methods for learning multiple GGMs,
such as the group graphical Lasso \cite{danaher2013joint}.
Compared with methods that use a convex penalty function, the nonconvexity of the penalty function leads to less bias on strong signals
and thus makes it possible to obtain good selection and estimation result at the same time. The log-shift penalty can also
be applied to estimating other models, such as undirected graphical models with non-Gaussian distributions, time-varying GGMs, etc., which we leave to future work.

\bibliographystyle{plainnat}
\bibliography{log_shift_ggm_arxiv}

\appendix
\section{Proofs}\label{app:proofs}
\begin{proof}[Proof of Theorem 1]
For any $\Omega^{(k)}\succeq 0$ with $\opnorm{\Omega^{(k)}}\leq b_k$,
\begin{multline*}\lambda_{\min}\left[\nabla^2_{\Omega^{(k)}}\left(\inner{\Omega^{(k)}}{S^{(k)}}-\log\det(\Omega^{(k)})\right) \right]=\\\lambda_{\min}\left[\Omega^{(k)}{}^{-1}\otimes \Omega^{(k)}{}^{-1}\right]=\left[\lambda_{\min}(\Omega^{(k)}{}^{-1})\right]^2 = \frac{1}{\opnorm{\Omega^{(k)}}^2}\geq\frac{1}{b_k^2}\;.\end{multline*}
Then
\[\mathbf{\Omega}\mapsto \sum_{k=1}^K \frac{n_k}{2}\left(\inner{\Omega^{(k)}}{S^{(k)}}-\log\det(\Omega^{(k)})-\frac{\norm{\Omega^{(k)}}^2_F}{b_k^2}\right)\]
is a convex function over $\mathbf{\Omega}\in\mathcal{S}_p(b)$. Applying \lemref{log_is_convex} given below, the function
\[x\mapsto \log(1+f(x)/\beta) + \frac{L^2}{2\beta^2}\norm{x}^2_2\]
 is convex, and so the following is a convex function over $\mathbf{\Omega}\in\mathcal{S}_p(b)$:
\begin{align*}
\mathbf{\Omega}&\mapsto \sum_{k=1}^K \frac{n_k}{2}\left(\inner{\Omega^{(k)}}{S^{(k)}}-\log\det(\Omega^{(k)})-\frac{\norm{\Omega^{(k)}}^2_F}{b_k^2}\right)+ \gamma\sum_{i<j}\beta\left[\log\big(1+f(\mathbf{\Omega}_{ij})/\beta\big)+\frac{L^2}{2\beta^2}\norm{\mathbf{\Omega}_{ij}}^2_2\right]\\
&= F(\mathbf{\Omega}) - \sum_{k=1}^K \frac{n_k}{2}\cdot \frac{\norm{\Omega^{(k)}}^2_F}{b_k^2}+ \gamma\sum_{i<j}\frac{L^2}{2\beta}\norm{\mathbf{\Omega}_{ij}}^2_2\\
&= F(\mathbf{\Omega}) - \sum_{k=1}^K\norm{\Omega^{(k)}}^2_F\cdot \left( \frac{n_k}{2b_k^2} - \frac{\gamma L^2}{4\beta}\right)\;,
\end{align*}
where the switch from a $2$ to a $4$ in the last step comes from the fact that $\mathbf{\Omega}_{ij}$ is penalized for $i<j$ but not $i>j$.

This proves that $F(\mathbf{\Omega})$ is convex over $\mathcal{S}_p(b)$ as long as $ \frac{n_k}{2b_k^2} \geq \frac{\gamma L^2}{4\beta}$ for all $k$, which is equivalent to the condition in the theorem. If this inequality is strictly satisfied, then this implies strict convexity of $F(\mathbf{\Omega})$.
\end{proof}

\begin{lemma}\label{lem:log_is_convex}
Let $f:\R^p\rightarrow\R$ be a $L$-Lipschitz convex nonnegative function and fix any $\beta>0$. Then
\[x\mapsto \log\big(1+f(x)/\beta\big) + \frac{L^2}{2\beta^2}\norm{x}^2_2\]
is a convex function.
\end{lemma}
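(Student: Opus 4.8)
The plan is to reduce to the case where $f$ is smooth and then inspect the Hessian of the objective directly. First I would assume $f$ is twice continuously differentiable (removing this at the end by mollification). Writing $H(x) = \log\big(1+f(x)/\beta\big) + \frac{L^2}{2\beta^2}\norm{x}^2_2$ and using $\log(1+f/\beta) = \log(\beta + f) - \log\beta$, a direct computation gives
\[\nabla^2 H(x) = \frac{\nabla^2 f(x)}{\beta + f(x)} - \frac{\nabla f(x)\,\nabla f(x)^\top}{(\beta+f(x))^2} + \frac{L^2}{\beta^2}\ident\;,\]
so that convexity of $H$ is equivalent to $\nabla^2 H(x)\succeq 0$ for every $x$.

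To verify positive semidefiniteness I would control the three terms separately. Since $f$ is convex we have $\nabla^2 f(x)\succeq 0$, and since $f\geq 0$ we have $\beta + f(x)\geq \beta>0$, so the first term is positive semidefinite and can be discarded. The middle term is a rank-one negative-semidefinite perturbation, and the key quantitative step is to bound its operator norm: because $f$ is $L$-Lipschitz we have $\norm{\nabla f(x)}_2\le L$, which together with $\beta + f(x)\ge \beta$ gives $\opnorm{\nabla f(x)\nabla f(x)^\top / (\beta+f(x))^2} = \norm{\nabla f(x)}^2_2/(\beta+f(x))^2 \le L^2/\beta^2$. Hence $-\nabla f(x)\nabla f(x)^\top/(\beta+f(x))^2 \succeq -\frac{L^2}{\beta^2}\ident$, and this is exactly cancelled by the added $\frac{L^2}{\beta^2}\ident$. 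Combining the three estimates yields $\nabla^2 H(x)\succeq 0$.

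To handle a general convex, nonnegative, $L$-Lipschitz $f$ that need not be differentiable, I would mollify: set $f_\eps = f * \phi_\eps$ for a smooth nonnegative mollifier $\phi_\eps$ of total mass one. Convolution preserves convexity and nonnegativity and does not increase the Lipschitz constant, so each $f_\eps$ is smooth, convex, nonnegative and $L$-Lipschitz, and the argument above shows that $H_\eps(x) = \log\big(1+f_\eps(x)/\beta\big) + \frac{L^2}{2\beta^2}\norm{x}^2_2$ is convex. Since $f_\eps\to f$ pointwise as $\eps\to 0$, we have $H_\eps\to H$ pointwise, and a pointwise limit of convex functions is convex, giving convexity of $H$.

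The main obstacle is precisely the potential non-smoothness of $f$: the clean second-order cancellation only makes literal sense where the Hessian exists, so the care in the proof goes into the mollification step (equivalently, into justifying the computation in the distributional/subgradient sense, where the kinks of a convex $f$ contribute only nonnegative curvature and thus help rather than hurt). The analytic heart of the argument is the single inequality $\norm{\nabla f}^2_2/(\beta+f)^2\le L^2/\beta^2$, which is exactly why the Lipschitz constant $L$ and the shift $\beta$ enter the convexity threshold \eqnref{convex_condition} in the form stated in \thmref{obj_fn_convex}.
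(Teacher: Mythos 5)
Your argument is correct, but it takes a genuinely different route from the paper. The paper never differentiates anything: it verifies the convexity inequality directly along a segment, first using convexity of $f$ and monotonicity of $\log$ to get $\log\big(1+f(tx+(1-t)y)/\beta\big)\le \log\big(t(1+f(x)/\beta)+(1-t)(1+f(y)/\beta)\big)$, then invoking the semi-concavity bound $\frac{\partial^2}{\partial z^2}\log(z)\in[-1,0]$ for $z\ge 1$ to pick up an error term $\frac{t(1-t)}{2\beta^2}(f(x)-f(y))^2\le \frac{t(1-t)}{2\beta^2}L^2\norm{x-y}_2^2$, which is exactly absorbed by the identity $\norm{tx+(1-t)y}_2^2=t\norm{x}_2^2+(1-t)\norm{y}_2^2-t(1-t)\norm{x-y}_2^2$. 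Your proof instead works at second order: the Hessian identity $\nabla^2 H=\frac{\nabla^2 f}{\beta+f}-\frac{\nabla f\,\nabla f^\top}{(\beta+f)^2}+\frac{L^2}{\beta^2}\ident$ is right, the three-term estimate is right (convexity and nonnegativity of $f$ make the first term PSD, and $\norm{\nabla f}_2\le L$ with $\beta+f\ge\beta$ controls the rank-one term), and the mollification step is sound since convolution with a nonnegative unit-mass kernel preserves convexity, nonnegativity, and the Lipschitz constant, and a pointwise limit of convex functions is convex. The two proofs rest on the same two quantitative inputs ($f\ge 0$ forcing the argument of the log to be at least $1$, and the Lipschitz bound on $f$), but the paper's version is more elementary and needs no approximation argument, while yours makes the mechanism more transparent --- the added quadratic exactly cancels the worst-case negative curvature of the composed log --- at the cost of the routine smoothing machinery. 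Either is acceptable; if you want to avoid mollification entirely, note that your key inequality is precisely the $z\ge 1$ bound on $\log''$ in disguise, which is how the paper sidesteps differentiability of $f$ altogether.
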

\begin{proof}
Take any $x,y\in\R^p$, and any $t\in[0,1]$. 
 Then, 
using the convexity of $f(\cdot)$,
\begin{align*}
&\log\big(1 + f(t\cdot x + (1-t)\cdot y)/\beta\big)\\
&\leq \log\big(1+t\cdot f(x)/\beta+(1-t)\cdot f(y)/\beta\big)\\
&= \log\big(t\cdot(1+f(x)/\beta) + (1-t)\cdot(1+f(y)/\beta)\big)\\
\intertext{Since $\frac{\partial^2}{\partial z^2}\log(z)\in [-1,0]$ for all $z\geq1$,}
&\leq t\cdot \log\big(1+f(x)/\beta\big) + (1-t)\cdot\log\big(1+f(y)/\beta\big) + \frac{t(1-t)}{2\beta^2}\cdot (f(y)-f(x))^2\\
&\leq t\cdot \log\big(1+f(x)/\beta\big) + (1-t)\cdot\log\big(1+f(y)/\beta\big) + \frac{t(1-t)}{2\beta^2}\cdot L^2\norm{x-y}^2_2\;.
\end{align*}
Then
\begin{align*}
&\log\big(1 + f(t\cdot x + (1-t)\cdot y)/\beta\big) + \frac{L^2}{2\beta^2}\norm{t\cdot x + (1-t)\cdot y}^2_2\\
&\leq t\cdot \log\big(1+f(x)/\beta\big) + (1-t)\cdot\log\big(1+f(y)/\beta\big)\\
&\hspace{1in} + \frac{t(1-t)}{2\beta^2}\cdot L^2\norm{x-y}^2_2+ \frac{L^2}{2\beta^2}\norm{t\cdot x + (1-t)\cdot y}^2_2\\
&\leq t\cdot \left[\log\big(1+f(x)/\beta\big)+\frac{L^2}{2\beta^2}\norm{x}^2_2\right] + (1-t)\cdot\left[\log\big(1+f(y)/\beta\big) +\frac{L^2}{2\beta^2}\norm{y}^2_2\right]\;,
\end{align*}
proving convexity of the function as desired.
\end{proof}

\begin{proof}[Proof of Theorem 2]
Define
\[\mathcal{S}_p(b;\mathcal{A})=\left\{\mathbf{\Omega}\in\mathcal{S}_p(b):\Omega^{(k)}_{ij}=0\text{ for all $k$ and all $i\not\sim_{\mathcal{A}}j$}\right\}\subset \mathcal{S}_p(b)\]
and let
\[\widehat{\mathbf{\Omega}}\in\argmin_{\mathbf{\Omega}\in\mathcal{S}_p(b;\mathcal{A})}F(\mathbf{\Omega})\;.\]
We will show that $\widehat{\mathbf{\Omega}}$ is a minimizer of $F(\mathbf{\Omega})$ over the larger set $\mathcal{S}_p(b)$.

Take any $\Delta=(\Delta^{(1)},\dots,\Delta^{(K)})$ with $\Delta^{(k)}\in\R^{d\times d}$ for each $k$. Let $D$ and $E$ be the block-diagonal and off-block-diagonal parts of $\Delta$; that is,
\[D^{(k)}_{ij}=\left\{\begin{array}{ll}\Delta^{(k)}_{ij}&\text{ if }i\sim_\mathcal{A} j\\0&\text{ if }i\not\sim_{\mathcal{A}}j\end{array}\right.\text{ and }E^{(k)}_{ij}=\left\{\begin{array}{ll}0&\text{ if }i\sim_\mathcal{A} j\\\Delta^{(k)}_{ij}&\text{ if }i\not\sim_{\mathcal{A}}j\end{array}\right.\;.\]
Suppose that $\widehat{\mathbf{\Omega}}+\Delta\in\mathcal{S}_p(b)$. Then 
\[b_k\geq \norm{\widehat\Omega^{(k)}+\Delta^{(k)}} \geq \norm{\left(\widehat\Omega^{(k)}+\Delta^{(k)}\right)_{A_r,A_r}} = \norm{\left(\widehat\Omega^{(k)}+D^{(k)}\right)_{A_r,A_r}}\;,\]
and so
\[\norm{\widehat{\Omega}^{(k)}+D^{(k)}}=\max_r  \norm{\left(\widehat\Omega^{(k)}+D^{(k)}\right)_{A_r,A_r}}\leq b_k\;,\]
proving that $\widehat{\mathbf{\Omega}}+D\in\mathcal{S}_p(b;\mathcal{A})$. Then
by optimality of $\widehat{\mathbf{\Omega}}$ over the set $\mathcal{S}_p(b;\mathcal{A})$,
\[F(\widehat{\mathbf{\Omega}}+D)\geq F(\widehat{\mathbf{\Omega}})\;.\]
Then
\begin{align*}
&F(\widehat{\mathbf{\Omega}}+\Delta)
 = F(\widehat{\mathbf{\Omega}}+D+E) \\
&=  \sum_{k=1}^K \frac{n_k}{2}\left(\inner{\widehat\Omega^{(k)}+D^{(k)}+E^{(k)}}{S^{(k)}}-\log\det(\widehat\Omega^{(k)}+D^{(k)}+E^{(k)})\right)\\
&\hspace{1in}+ \gamma\sum_{i<j}\beta\log\big(1+f(\widehat{\mathbf{\Omega}}_{ij}+D_{ij}+E_{ij})/\beta\big)\\
\intertext{Since $D_{ij}$ and $\widehat{\mathbf{\Omega}}_{ij}$ are nonzero only when $i,j$ are in the same block, and $E_{ij}$ is nonzero only if $i,j$ are in different blocks,}
&=  \sum_{k=1}^K \frac{n_k}{2}\left(\inner{\widehat\Omega^{(k)}+D^{(k)}}{S^{(k)}}-\log\det(\widehat\Omega^{(k)}+D^{(k)})\right) + \gamma\sum_{ i<j}\beta\log\big(1+f(\widehat{\mathbf{\Omega}}_{ij}+D_{ij})/\beta\big)\\
&\hspace{0.5in}+ \sum_{k=1}^K \frac{n_k}{2}\left(\inner{E^{(k)}}{S^{(k)}}-\log\det(\widehat\Omega^{(k)}+D^{(k)}+E^{(k)})+\log\det(\widehat\Omega^{(k)}+D^{(k)})\right) \\
&\hspace{1in}+ \gamma\sum_{i\not\sim_{\mathcal{A}} j, i<j}\beta\log\big(1+f(E_{ij})/\beta\big)\\
\intertext{Letting $M=\#\{(i,j):i\not\sim_{\mathcal{A}}j,i<j\}$,}
&=F(\widehat{\mathbf{\Omega}}+D)-M\gamma\cdot\beta\log(1+f(\zeros)/\beta)\\
&\hspace{0.5in} + \sum_{k=1}^K \frac{n_k}{2}\left(\inner{E^{(k)}}{S^{(k)}}-\log\det(\widehat\Omega^{(k)}+D^{(k)}+E^{(k)})+\log\det(\widehat\Omega^{(k)}+D^{(k)})\right) \\
&\hspace{1in}+ \gamma\sum_{i\not\sim_{\mathcal{A}} j, i<j}\beta\log\big(1+f(E_{ij})/\beta\big)\\
\intertext{By optimality of $\mathbf{\Omega}$ over $\mathcal{S}_p(b;\mathcal{A})$,}
&\geq F(\widehat{\mathbf{\Omega}}) -M\gamma\cdot\beta\log(1+f(\zeros)/\beta)\\
&\hspace{0.5in} + \sum_{k=1}^K \frac{n_k}{2}\left(\inner{E^{(k)}}{S^{(k)}}-\log\det(\widehat\Omega^{(k)}+D^{(k)}+E^{(k)})+\log\det(\widehat\Omega^{(k)}+D^{(k)})\right) \\
&\hspace{1in}+ \gamma\sum_{i\not\sim_{\mathcal{A}} j, i<j}\beta\log\big(1+f(E_{ij})/\beta\big)\\
\intertext{Since $\frac{\partial}{\partial \mathbf{\Omega}}\log\det(\mathbf{\Omega})=-\mathbf{\Omega}^{-1}$ and $\norm{\nabla^2\log\det(\mathbf{\Omega}')}$ is bounded for $\mathbf{\Omega}'$ near $\mathbf{\Omega}$, we can apply a Taylor expansion to the difference of $\log\det(\cdot)$ terms:}
&= F(\widehat{\mathbf{\Omega}})-M\gamma\cdot\beta\log(1+f(\zeros)/\beta)\\
&\hspace{0.5in} + \sum_{k=1}^K \frac{n_k}{2}\left(\inner{E^{(k)}}{S^{(k)}}-\inner{E^{(k)}}{-(\Omega^{(k)}+D^{(k)})^{-1}}-\O{\norm{E^{(k)}}^2_F}\right) \\
&\hspace{1in}+ \gamma\sum_{i\not\sim_{\mathcal{A}} j, i<j}\beta\log\big(1+f(E_{ij})/\beta\big)\\\intertext{Since $\widehat\Omega^{(k)}+D^{(k)}$ is block-diagonal and therefore so is $\left(\widehat\Omega^{(k)}+D^{(k)}\right)^{-1}$, while $E^{(k)}$ is supported off of the diagonal blocks,}
&= F(\widehat{\mathbf{\Omega}})-M\gamma\cdot\beta\log(1+f(\zeros)/\beta)\\
&\hspace{0.5in} + \sum_{k=1}^K \frac{n_k}{2}\left(\inner{E^{(k)}}{S^{(k)}}-\O{\norm{E^{(k)}}^2_F}\right) + \gamma\sum_{i\not\sim_{\mathcal{A}} j, i<j}\beta\log\big(1+f(E_{ij})/\beta\big)\\
\intertext{Applying Taylor expansion to the terms $\log(1+f(E_{ij})/\beta)$, for $E_{ij}$ sufficiently close to $0$,}
&= F(\widehat{\mathbf{\Omega}})-M\gamma\cdot\beta\log(1+f(\zeros)/\beta)\\
&\hspace{0.5in} + \sum_{k=1}^K \frac{n_k}{2}\left(\inner{E^{(k)}}{S^{(k)}}-\O{\norm{E^{(k)}}^2_F}\right)\\
&\hspace{1in} + \gamma\sum_{i\not\sim_{\mathcal{A}} j, i<j}\beta\left[\log(1+f(\zeros)/\beta)+\frac{f(E_{ij})-f(\zeros)}{\beta} - \O{\left(f(E_{ij})-f(\zeros)\right)^2}\right]\\
\intertext{Since $f$ is $L$-Lipschitz,}
&= F(\widehat{\mathbf{\Omega}})-M\gamma\cdot\beta\log(1+f(\zeros)/\beta)\\
&\hspace{0.5in} + \sum_{k=1}^K \frac{n_k}{2}\left(\inner{E^{(k)}}{S^{(k)}}-\O{\norm{E^{(k)}}^2_F}\right)\\
&\hspace{1in} + \gamma\sum_{i\not\sim_{\mathcal{A}} j, i<j}\beta\left[\log(1+f(\zeros)/\beta)+\frac{f(E_{ij})-f(\zeros)}{\beta} - L^2\cdot  \O{\norm{E_{ij}}^2_2}\right]\\
\intertext{Simplifying,}
&= F(\widehat{\mathbf{\Omega}})
 + \sum_{k=1}^K \frac{n_k}{2}\left(\inner{E^{(k)}}{S^{(k)}}-\O{\norm{E^{(k)}}^2_F}\right)\\
&\hspace{1in} + \gamma\sum_{i\not\sim_{\mathcal{A}} j, i<j}\beta\left[\frac{f(E_{ij})-f(\zeros)}{\beta} - L^2\cdot  
\O{\norm{E_{ij}}^2_2}\right]\\
\intertext{If we assume that $-\gamma^{-1}\cdot\diag\{n_1,\dots,n_K\} \cdot \mathbf{S}_{ij}\in\partial f(\zeros)$ for all $i\not\sim_{\mathcal{A}} j$,}
&\geq F(\widehat{\mathbf{\Omega}})+ \sum_{k=1}^K \frac{n_k}{2}\left(\inner{E^{(k)}}{S^{(k)}}-\O{\norm{E^{(k)}}^2_F}\right) \\
&\hspace{1in}+ \gamma\sum_{i\not\sim_{\mathcal{A}} j, i<j}\beta\left[\frac{\inner{E_{ij}}{-\gamma^{-1}\cdot\diag\{n_1,\dots,n_K\} \cdot \mathbf{S}_{ij}}}{\beta} - L^2\cdot  \O{\norm{E_{ij}}^2_2}\right]\\
\intertext{Cancelling out the terms that are linear in $E$,}
&= F(\widehat{\mathbf{\Omega}})- \sum_{k=1}^K \frac{n_k}{2}\O{\norm{E^{(k)}}^2_F} - \alpha\sum_{i\not\sim_{\mathcal{A}} j, i<j} L^2\cdot  \O{\norm{E_{ij}}^2_2}\\
&=F(\widehat{\mathbf{\Omega}}) -\O{ \sum_{i,j,k}E^{(k)}_{ij}{}^2}
\geq F(\widehat{\mathbf{\Omega}}) -\O{ \sum_{i,j,k}\Delta^{(k)}_{ij}{}^2}\;.\end{align*}
 Since $F$ is convex over $\mathcal{S}_p(b)$ which is itself a convex set, this is sufficient to prove that $\widehat{\mathbf{\Omega}}$ is a minimizer of $F(\cdot)$ over $\mathcal{S}_p(b)$. 
\end{proof}

\end{document}